\newcommand{\nosemic}{\renewcommand{\@endalgocfline}{\relax}}
\newcommand{\dosemic}{\renewcommand{\@endalgocfline}{\algocf@endline}}
\let\oldnl\nl
\newcommand{\nonl}{\renewcommand{\nl}{\let\nl\oldnl}}
\newcounter{problemenv}
\newtheorem{theorem}{Theorem}
\newcommand{\eg}{\textit{e.g.},}
\newcommand{\ie}{\textit{i.e.},}
\newcommand{\eqend}{\,.}
\newcommand{\cat}[1]{\medskip\noindent\textbf{#1.}}
\newcommand{\SysName}{\textsmaller{\textsf{{dSamp}}}}
\newcommand{\ILP}{{\textsmaller {\textsf{ILP}}}}
\newcommand{\APX}{{\textsmaller {{\textsf{APX}}}}}
\newcommand{\csamphundred}{{\textsmaller{\textsf{cS+100}}}}
\newcommand{\csamphundredfifty}{{\textsmaller{\textsf{cS+150}}}}
\newcommand{\csamptwohundred}{{\textsmaller{\textsf{cS+200}}}}
\newcommand{\ISOCP}{{\textsmaller{\textsf{ISOCP}}}}
\newcommand{\DS}{{\textsmaller{{\textsf{DS}}}}}
\newcommand{\DStwo}{{\textsmaller {{\textsf{DS+2$\sigma$}}}}}
\newcommand{\dSamp}{{\textsmaller {{\textsf{dSamp}}}}}
\newcommand{\pr}[1]{\mathbb{P}\left\{#1\right\}}
\newcommand{\mc}[1]{\mathcal{#1}}
\begin{document}

\title{\scalebox{0.9}{Coordinated Sampling in SDNs with Dynamic Flow Rates}}

\author{
	\IEEEauthorblockN{
		Soroosh~Esmaeilian\IEEEauthorrefmark{1},
		Mahdi~Dolati\IEEEauthorrefmark{2},
		Sogand~Sadrhaghighi\IEEEauthorrefmark{1},
		and
		Majid~Ghaderi\IEEEauthorrefmark{1}
	}
	\IEEEauthorblockA{
		\IEEEauthorrefmark{1}Dept. of Computer Science, University of Calgary, Calgary, Canada,
		\IEEEauthorrefmark{2}School of Computer Science, IPM, Tehran, Iran.
	}
}

\maketitle

\begin{abstract}
Traffic sampling has become an indispensable tool in network management. While there exists a plethora of sampling systems, they generally assume flow rates are stable and predictable over a sampling period. Consequently, when deployed in networks with dynamic flow rates, some flows may be missed or under-sampled, while others are over-sampled. This paper presents the design and
evaluation of \SysName, a network-wide sampling system capable of handling dynamic flow rates in Software-Defined Networks (SDNs). The key idea in \SysName\ is to consider flow rate fluctuations when deciding on which network switches and at what rate to sample each flow. To this end, we develop a general model for sampling allocation with dynamic flow rates, and then design an efficient approximate integer linear program called \APX\ that can be used to compute sampling allocations even in large-scale networks.
To show the efficacy of \SysName\ for network monitoring, we have implemented \APX\ and several existing solutions in ns-3 and conducted extensive experiments using model-driven as well as trace-driven simulations. Our results indicate that, by considering dynamic flow rates, \APX\ outperforms the existing solutions by up to $10\%$ in sampling more flows at a given sampling rate.
\end{abstract}

\section{Introduction}
\label{s:intro}
\cat{Motivation}
Network traffic monitoring is crucial for many network
management tasks such as traffic engineering \cite{tsai2018network}, anomaly
detection~\cite{li2011scan}, traffic matrix estimation\cite{tian2018sdn} and customer accounting~\cite{duffield2001charging}. While aggregate
traffic statistics (\eg\ average traffic rate) are sufficient for some network management tasks (\eg\ traffic engineering), there is a growing need for having detailed flow-level measurements in other tasks, such as traffic classification~\cite{zhang2013unsupervised}. 
To address this need, many network operators employ monitoring solutions based on packet sampling, where network switches sample only a subset of the packets that pass through them~\cite{cohen2018sampling}, \cite{li2016flowradar}. However, sampling is a resource-intensive operation
for network switches, given their restricted processing capabilities.
A typical switch can only sample a tiny
fraction of the packets it encounters before its performance
starts to degrade~\cite{suh2014opensample}. 

Sampling can be performed on a per-port or per-flow basis. In per-port sampling solutions,
such as NetFlow~\cite{claise2004cisco} and sFlow~\cite{sflow}, a sampling rate is specified
for each switch input port. As a result, flows with higher
packet rates are more likely to be sampled compared to flows
with lower rates. Consequently, these solutions are inadequate for network management tasks that require a minimum per-flow sampling
rate, \eg\ anomaly detection~\cite{cantieni2006reformulating}.
Per-flow sampling addresses this bias toward high-rate flows by specifying a target sampling rate for each \textit{flow}. Several works such as~\cite{suarez2017towards} and~\cite{shirali2013flexam} have proposed extensions to the standard OpenFlow to enable per-flow sampling on commodity OpenFlow switches. Sampling solutions based on per-flow sampling are proposed in~\cite{shirali2013flexam} and~\cite{raspall2012efficient}. However, in these solutions, switches sample flows independently, resulting in redundant flow samples and inefficient use of switch resources. 

To prevent redundant sampling and enhance the flow monitoring capabilities of the network, coordinated sampling is considered~\cite{sekar2008csamp,sekar2010revisiting,cohen2018sampling,sadrhaghighi2022flowshark}, where a centralized controller coordinates monitoring responsibilities across different switches in the network. 
To determine per-flow sampling rates on each switch, these works generally assume that the flow rates are deterministic in the sense that they are 1) fixed over time, and 2) known a priori. In reality, flow rates are variable and fluctuate over time. As a result, these solutions may lead to under-sampling or over-sampling when the actual traffic rate of a flow deviates from the one assumed by the solution. This weakness renders such solutions unsuitable for many network management tasks that require a consistent minimum flow sampling rate to ensure effective and accurate performance.

As a motivating example, consider the small network depicted in Fig.~\ref{toy_example}. The network consists of two switches denoted by $S_1$ and $S_2$, each having a sampling capacity of $3$~packets per second (pps). There are four traffic flows denoted by $f_1$,$f_2$,$f_3$ and $f_4$ that are required to be sampled at the sampling rate of $0.1$. The rates of $f_1$ and $f_2$ follow a Normal distribution with a mean of $5$~pps and a standard deviation of $10$~pps, while the rates of $f_3$ and $f_4$ follow a Normal distribution with a mean of $14$~pps and standard deviation of $1$~pps.
A deterministic sampling solution that ignores rate fluctuations assigns flows $f_1$ and $f_3$ to switch $S_1$, and flows $f_2$ and $f_4$ to switch $S_2$ for sampling. This assignment ensures equal sampling load distribution (with respect to the mean flow rates) between the switches, which is a common objective in coordinated sampling solutions~\cite{sekar2008csamp,sadrhaghighi2022flowshark}. Let $\delta_i$ denote the probability that the capacity of switch $i$ is violated (\ie\ its sampling capacity is not sufficient to sample its flows at the specified target sampling rate). We have:
\[
	\delta_1 = \delta_2 = \pr{0.1 \times \mathcal{N}(14+5, 1+100) > 3} = 0.14
	\eqend
\]
Next, consider a sampling solution that considers variability in flow rates, such as the one proposed in this paper. It may decide on the following assignment, where flows $f_1$ and $f_2$ are assigned to switch $S_1$ and $f_3$ and $f_4$ are assigned to switch $S_2$. In this case, we obtain that:
\begin{align*}
	\delta_1 &= \pr{0.1 \times \mathcal{N}(5+5, 100+100) > 3} = 0.08,\\
	\delta_2 &= \pr{0.1 \times \mathcal{N}(14+14, 1+1) > 3} = 0.08,
\end{align*}
which represents almost $43\%$ reduction in the switch capacity violation probability by simply considering fluctuations in flow rates. Any reduction in the switch capacity violation probability translates to more precise flow sampling and, consequently, enhanced visibility of network flows especially for management tasks that require a minimum sampling rate.

\begin{figure}[t]
	\centering
	\includegraphics[width=80mm]{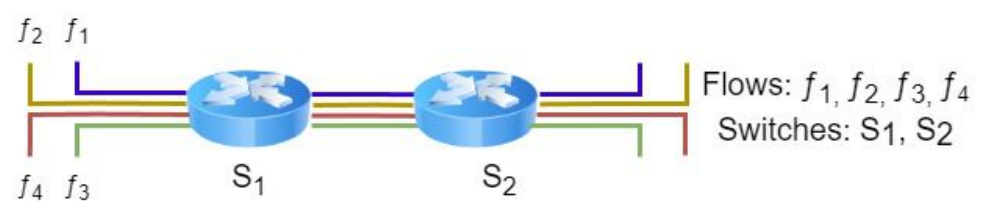}
	\caption{A small network with two switches and four flows.}
	\label{toy_example}
	\vspace{-5mm}
\end{figure}

\cat{Our work}
We introduce (\SysName), a network-wide sampling solution capable of handling dynamic flow rates. \SysName\ is designed to operate with commodity OpenFlow switches and SDN network controllers. The key idea behind \SysName\ design is to incorporate statistical information about flow rates into the process of making sampling decisions. This is in contrast to existing works that assume knowledge about the \textit{exact rate} of flows. Specifically, \SysName\ only requires information about the mean and variance of flow rates. As discussed earlier, it is unrealistic to assume that flow rates are fixed and known exactly in advance. However, not only it is straightforward to obtain statistical information about flows, for example, based on historical measurements or service level agreements, but also this information is more robust (\ie\ does not fluctuate significantly) compared to the instantaneous flow rates. 
 
Our main contributions in this paper are:
\begin{itemize}
	\item We present the design and evaluation of \SysName, a network-wide sampling system, capable of handling flow rate fluctuations in SDNs. 
	\item  We formulate flow sampling with statistical flow rate information as an Integer Second Order Cone Program (ISOCP). Given the complexity of solving the ISOCP formulation for both small and large-scale networks, we present two reformulations to obtain solutions with practical runtimes. The first reformulation is a transformation and more suitable for small size networks, while the second reformulation is approximate but suitable for larger networks.
	\item We provide extensive model-driven and trace-driven simulations using ns-3. For trace-driven simulations, we use real traffic traces collected from an ISP. Our results show that \SysName\ outperforms existing sampling solutions based on deterministic flow rates. 
\end{itemize}

\cat{Organization}
Related works are reviewed in Section~\ref{s:related}. The high-level design of \SysName\ is presented
in Section~\ref{s:system_design}. In Section~\ref{s:optimization}, we present our algorithms and
their analysis. Evaluation results are presented in Section ~\ref{s:eval}. The concluding remarks are presented in Section ~\ref{s:conc}.

\begin{figure}[ht!]
	\centering
	\includegraphics[width=80mm]{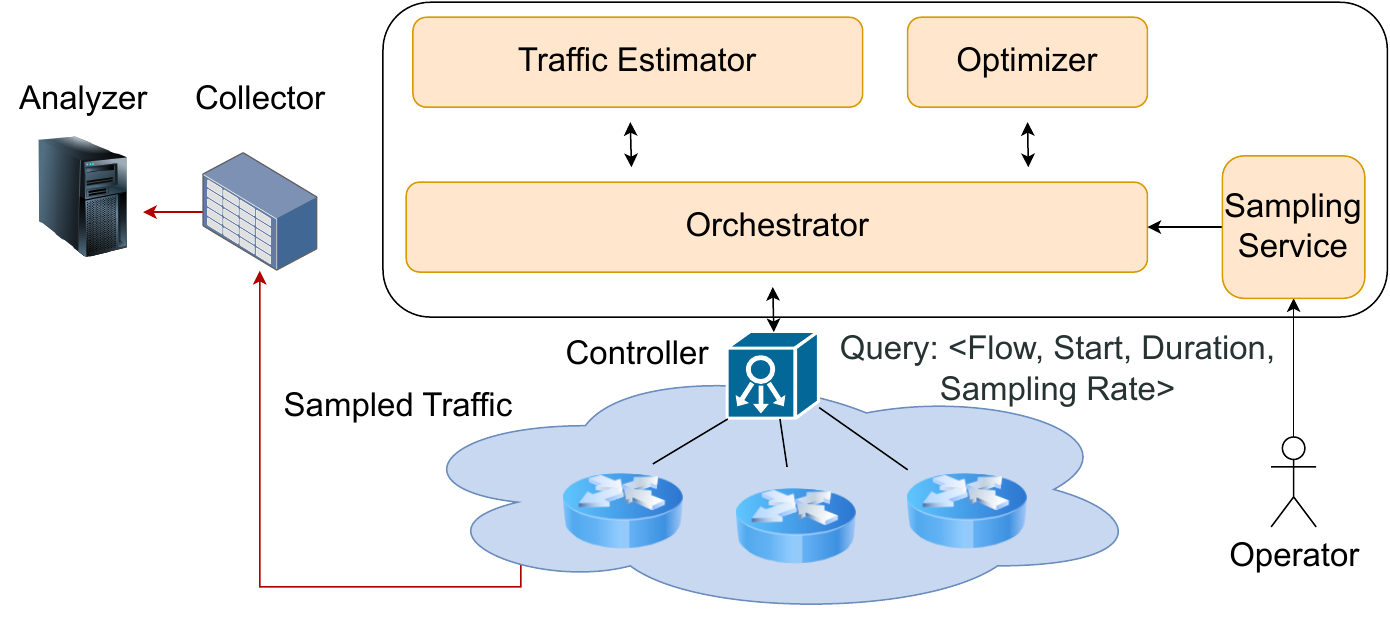}
	\caption{High-level architecture of \SysName. \label{design}}
\end{figure}

\section{System Design}\label{s:system_design}
The high-level architecture of \SysName\ is depicted in Fig.~\ref{design}. The main component of \SysName\ is the orchestrator, which is implemented on top of the SDN controller. The design of \SysName\ also includes an optimizer module that works over sampling epochs. The sampling queries that are sent to the sampling service during the current sampling epoch are batched together and then queried by the orchestrator. Afterwards, the orchestrator processes them in a single operation at the beginning of the following sampling epoch. The epoch-based structure enables \SysName\ to adapt its sampling decisions to the dynamic set of flows at the start of each epoch. In our design, a flow is specified by a source-destination IP address pair. Thus, a flow may represent the aggregated traffic between an Origin-Destination (OD) pair in an ISP network or a pair of virtual machines in a datacenter network. The length of the sampling epoch is a design parameter that can be tuned to achieve different trade-offs. In particular, a shorter epoch allows the system to adapt more quickly to changes in network traffic, while a longer epoch helps reduce the control overhead. In the following, we provide a high-level overview of the system's operational workflow, followed by a detailed examination of the individual components within the architecture.

\cat{System Workflow} 
There are two phases in \SysName\ operation, namely the setup phase and the sampling phase.
In the setup phase, the network operator adjusts various system parameters such as the sampling epoch length.

Once the setup phase is completed, the sampling phase starts. As stated before, \SysName\ works over sampling epochs. At the beginning of each epoch, the orchestrator queries the controller to collect statistical data about flows and their respective paths. Additionally, it queries the sampling service to identify the flows to be sampled. It then invokes traffic estimator and optimizer modules to compute a sampling schedule for the selected flows. A \textit{sampling schedule} determines which switch in the network will sample each flow. Finally, the orchestrator communicates with the controller to install the newly computed sampling schedule on the relevant switches. The destination of all sampled traffic is the collector server (essentially a load balancer in front of storage servers), which is linked to analysis applications that carry out the required management task.

\cat{Sampling Service} The sampling service receives sampling queries from the users (\eg\ network operators and tenants). At the end of each epoch, the orchestrator collects all the sampling queries from the sampling service. Queries are defined by a tuple $\mathrm{<flow, start, duration, sampling~rate>}$. 
The filed ``start'' denotes the start time of sampling, while ``duration'' indicates the duration of sampling for the specified flow. The filed ``sampling rate'' determines the desired sampling rate.

\cat{Orchestrator} The orchestrator is the main component of \SysName\ and includes two modules, namely traffic estimator and optimizer. All communications between \SysName\ and network switches go through the orchestrator. At the start of each epoch, the orchestrator collects all sampling queries from the sampling service. Additionally, it obtains flow statistics and their paths from the controller for the recently ended epoch. The orchestrator passes this information to the traffic estimator and, in turn, receives the mean and variance for each flow's rate. Using the mean and variance of each flow's rate, as well as their paths, the orchestrator leverages the optimizer module to create a sampling schedule for each flow. This schedule is then installed on network switches via the controller.

\cat{Traffic Estimator} To calculate an optimal sampling schedule for flows with non-deterministic rates, \SysName\ requires the mean and variance estimates of the flow rate. To this extent, various approaches can be utilized, \eg~\cite{zhou2006traffic} employs an autoregressive model, while\cite{andreoletti2019network} uses machine learning to estimate flow rates. In our design, we applied a simple estimator based on past traffic statistics. Specifically, the traffic estimator queries the controller to collect statistics about flow rates over the past epochs and then estimates the mean and variance of flow rates for the upcoming epoch based on the collected statistics. 

\cat{Optimizer} The optimizer is responsible for computing a sampling schedule while considering flow rate fluctuations. The goal is to optimize the use of switch sampling resources in order to maximize the number of fully sampled flows, \ie\ flows that are sampled at their specified target rate. To achieve this goal, \SysName\ limits sampling each flow to a single switch along its path. Hence, depending on the available switch capacity, a flow either could be sampled on one switch along its path or will not be sampled at all. This will satisfy two goals: 1)  avoiding duplicate sampled packets, and 2) improving flow visibility, as \SysName\ only samples flows for which there is a high confidence of fully sampling them. This way, \SysName\ gives higher priority to fully sampling flows (for improved flow visibility) rather than sampling many flows but only partially. This design is advantageous for tasks requiring a specific minimum sampling rate for each individual flow. The primary focus of the following section is the design of efficient algorithms for the optimizer. 

\begin{table}[h]
	\centering
	\fontsize{8}{8}\selectfont
	\renewcommand{\arraystretch}{1.05}
	\caption{Table of Notations.}
	\label{t:table_notations}
	\begin{tabular}{m{0.15\columnwidth}m{0.72\columnwidth}} 
		\Xhline{\arrayrulewidth}
		\Xhline{0.1pt}
		\\[-0.8em]
		\textbf{Notation} & \textbf{Description} \\ [0.5ex] 
		\hline
		\\[-0.8em]
		${r}_{f}$ & Sampling load of flow $f$ \\ 
		${\mu}_{f}$ & Mean sampling load for flow $f$ \\
		$\sigma^2_{f}$ & Variance of sampling load for flow $f$ \\
		${x}_{f,s}$ & Sampling decision for flow $f$ on switch $s$  \\
		${B}_{s}$ & Sampling capacity of switch  $s$  \\
		$\mathcal{F}$ & Set of flows to be sampled \\
		$\mathcal{S}$ & Set of switches in the network \\
		$\mathcal{F}_{s}$ & Set of flows that pass through switch $s$\\
		$\mathcal{S}_{f}$ & Set of switches on the path of flow $f$\\
		$\delta$ & Probability of violating a switch sampling capacity \\
		$\alpha$ & Pre-specific target sampling rate \\
		[0.5ex] 
		\Xhline{\arrayrulewidth}
		\Xhline{0.1pt}  
	\end{tabular}
\end{table}
\cat{Discussion} While the design of \SysName\ primarily targets ISPs, it could also be altered to be used in DCs. In the DC setting, the sampling query specifies a list of VM pairs belonging to a tenant. Similar to the case of ISP deployment, the system reconfigures switches only at the beginning of each sampling epoch. This strategy is consistent with the original design, minimizing control overhead associated with collecting flow statistics and reconfiguring switches.
\section{Sampling Optimization}\label{s:optimization}
Our goal in this section is to design an \textit{efficient} algorithm for computing the network-wide sampling allocations with dynamic flow rates. The algorithm must be efficient in terms of monitoring performance as well as computational complexity so it can be applied to large network scenarios. To this end, we show that dynamic flow rates can be incorporated in the problem formulation, but they lead to second-order constraints that are computationally intensive to solve, even for small network scenarios. Consequently, we focus on developing an approximate formulation that can be utilized in large-scale networks, while achieving performance that is close to that of the exact formulation. The important notions used in this section are shown in Table~\ref{t:table_notations}.

\subsection{Problem Formulation}
Let $\mc{F}$ denote the set of \emph{flows} to be sampled in the network. We define $\mc{S}$ to be the set of all switches in the network and $\mc{S}_{f}$ to be the set of switches on the path of flow $f\in\mc{F}$. For notation convenience, we also define $\mc{F}_{s}$ to be a subset of flows in $\mc{F}$ that pass through switch $s\in\mc{S}$. 
The network orchestrator requires each flow in $\mc{F}$ to be sampled at a target sampling rate (to achieve a network monitoring objective). Consequently, the sampling load of each flow, defined as the rate of sampled traffic transmitted to the collector, is computed from its rate and its target sampling rate. Each switch has a limited capacity to send sampled packets to the collector. We use $B_{s}$ to denote the transmission capacity of switch $s\in\mc{S}$ for sending the sampled packets. 
We consider a scenario where the flow rates are dynamic and fluctuate over time. Thus, we use the random variable $r_{f}$ to show the sampling load of flow $f$. This random variable models the fluctuation of the flow's rate, and any realization of $r_{f}$ is valid.

To formulate the sampling allocation problem, we define binary decision variable $x_{f,s}$, which is set to one if flow $f$ is sampled on switch $s$, and to zero otherwise. To avoid duplications, where a packet is sampled redundantly on multiple switches, we require that each flow be sampled either on one switch or nowhere at all if the available sampling capacity is not sufficient to handle the resulting sampling load. Consequently, we introduce the following constraint to ensure that each flow is sampled by at most one switch in the network:
\begin{gather} \label{eq_admit}
	\textstyle\sum_{s\in\mc{S}_{f}} x_{f,s} \le 1,
	\quad
	\forall f\in\mc{F}
	\eqend
\end{gather}
Given the stochastic nature of the flow rates, we use the following probabilistic constraints to capture the limited sampling capacity for each switch:
\begin{gather} \label{eq_bw_pr}
	\mathbb{P}\big\{
		\textstyle\sum_{f\in\mc{F}_{s}} r_{f}\cdot x_{f,s} \le B_{s}
	\big\} \ge 1-\delta,
	\quad 
	s\in\mc{S},
\end{gather}

where, $0 < \delta < 1$ is a small probability with which the network orchestrator accepts the violation of the sampling capacity of a switch. Note that the violation of the sampling capacity results in the loss of a subset of sampled packets, which translates into a lower realized sampling rate for the affected flows on that switch. Although a lower value for $\delta$ reduces the capacity violation probability, it also leads to rejecting more flows (\ie\ setting more $x_{f,s}$ to zero) to ensure that the available capacity is sufficient to absorb any fluctuation in the rate of those flows that are accepted for sampling (\ie\ flows whose corresponding $x_{f,s}$ is set to one).

Incorporating~\eqref{eq_bw_pr} directly into an optimization program results in a stochastic optimization, which is generally non-trivial to solve efficiently unless the exact distribution of variables $r_f$ is known, and can be expressed via simple closed-form expressions, none of which are true in real-world settings. Instead, a common approach in the literature\cite{sekar2010coordinated,cohen2018sampling,sadrhaghighi2022flowshark} is to consider a deterministic version of~\eqref{eq_bw_pr}, as follows:
\begin{gather}
	\textstyle\sum_{f\in\mc{F}_{s}} r_{f}\cdot x_{f,s} \le B_{s},
	\quad 
	s\in\mc{S},
\end{gather}
which is a linear constraint, but it entirely ignores the flow rate dynamics. However, a closer inspection of constraints~\eqref{eq_bw_pr} reveals that we are only interested in computing \textit{the tail probability of the sum of random variables}. There are many concentration bounds, such as Chernoff bounds\cite{chernoff1952measure}, that can be used to estimate such tail probabilities with different degrees of accuracy and complexity. If the target sampling rates are small, which is indeed the case in network-wide sampling, then each switch will have sufficient capacity to sample multiple flows. In such a setting, we can apply the central limit theorem to provide a tight estimation for the tail probability in~\eqref{eq_bw_pr}, which yields the following constraints: 
\begin{gather}\label{eq_bw_norm}
	\textstyle\sum_{f\in\mc{F}_{s}} \mu_{f}\cdot x_{f,s} 
	+
	z_{\delta}\textstyle\sqrt{
		\sum_{f\in\mc{F}_{s}} \sigma_{f}^{2}\cdot x_{f,s} 
	} \le B_{s},
\end{gather}
where, $z_{\delta}$ denotes the $(1-\delta)$ quantile of the standard Normal distribution. Putting this all together, the network-wide sampling allocation problem with dynamic flow rates can be defined as the following integer second order cone program:
\begin{align}
	\text{($\ISOCP$):} \qquad \text{max. } &\sum_{f\in\mathcal{F}}\sum_{s\in\mathcal{S}} x_{f,s}, \label{eq_cone_prob} \\
	\text{s.t. } &\eqref{eq_admit}, \eqref{eq_bw_norm}\eqend \notag
\end{align}

\vspace{-5mm}
\subsection{Linearization}
Given the conic constraints~\eqref{eq_bw_norm} in the ISOCP formulation, it is generally infeasible to solve with off-the-shelf solvers in a reasonable timeframe for sampling in realistic-sized networks. However, it is possible to transform these constraints into linear ones by introducing additional auxiliary integer variables in the formulation. Our numerical experiments indicated that this transformation significantly improves the computation time when employing off-the-shelf solvers such as Gurobi~\cite{GurobiOptimization}. To this end, we start by rewriting constraints~\eqref{eq_bw_norm} as follows where both sides of the inequality are squared:
\begin{align} \label{eq_bw_pow_2}
	z_{\delta}^{2}\textstyle\sum_{f\in\mc{F}_{s}} \sigma_{f}^{2}\cdot x_{f,s} 
	\le 
	(B_{s} - \textstyle\sum_{f\in\mc{F}_{s}} \mu_{f}\cdot x_{f,s})^{2}\eqend
\end{align}
To ensure the original constraints are enforced after this transformation, it is necessary that both sides of the inequality are non-negative. Clearly, the left-hand side is always positive, while the right-hand side is unrestricted. To resolve this, we also include the following constraints in the formulation:
\begin{align} \label{eq_pos}
	B_{s} - \textstyle\sum_{f\in\mc{F}_{s}} \mu_{f}\cdot x_{f,s} \ge 0\eqend
\end{align}
Then, we expand the right-hand side of~\eqref{eq_bw_pow_2} to obtain:
\begin{align}
	B_{s}^{2} - 2B_{s}\textstyle\sum_{f\in\mc{F}_{s}} \mu_{f}\cdot x_{f,s} + (\sum_{f\in\mc{F}_{s}} \mu_{f}\cdot x_{f,s})^{2},
\end{align}
where, the last term can be further expanded to yield the following expression:
\begin{align}
	\textstyle\sum_{f\in\mc{F}_{s}} \mu_{f}^{2}\cdot x_{f,s}^{2} 
	+ 
	2\sum_{f,f^{'}\in\mc{F}_{s}} \mu_{f}\cdot x_{f,s}\cdot \mu_{f^{'}}\cdot x_{f^{'},s}\eqend
\end{align}
Now, we introduce auxiliary variables $w_{f,f^{'},s}$ to model the multiplication of binary variables $x_{f,s}$ and $x_{f^{'},s}$. The following constrains ensure that $w_{f,f^{'},s}$ is only equal to one when both $x_{f,s}$ and $x_{f^{'},s}$ are equal to one:
\begin{align}
	&w_{f,f^{'},s} \le x_{f,s}, \label{eq_w_1} \\
	&w_{f,f^{'},s} \le x_{f^{'},s}, \label{eq_w_2} \\
	&w_{f,f^{'},s} \ge x_{f^{'},s} + x_{f,s} - 1\eqend \label{eq_w_3}
\end{align}
Using $w_{f,f^{'},s}$, we can re-write~\eqref{eq_bw_pow_2} as follows:
\begin{multline} \label{eq_soc_lined}
	z_{\delta}^{2}\textstyle\sum_{f\in\mc{F}_{s}} \sigma_{f}^{2}\cdot x_{f,s} 
	\le B_{s}^{2} - 2B_{s}\sum_{f\in\mc{F}_{s}} \mu_{f}\cdot x_{f,s}\\ 
	+ \sum_{f\in\mc{F}_{s}} \mu_{f}^{2}\cdot x_{f,s}  + 2\textstyle\sum_{f,f^{'}\in\mc{F}_{s}} \mu_{f}\mu_{f^{'}}\cdot w_{f,f^{'},s},
\end{multline}
which has a linear structure. Therefore, the original ISOCP can be transformed into the following linear program:
\begin{align}
	\text{($\ILP$):} \qquad \text{max. } &\sum_{f\in\mathcal{F}}\sum_{s\in\mathcal{S}} x_{f,s}, \label{eq_int_prob} \\
	\text{s.t. } &\eqref{eq_admit}, \eqref{eq_pos}, \eqref{eq_w_1}, \eqref{eq_w_2}, \eqref{eq_w_3}, \eqref{eq_soc_lined}\eqend \notag
\end{align}

\begin{theorem}
	\ILP\ requires at most $O(|\mc{F}|^{2}\mc{S})$ more decision variables compared with \ISOCP.
\end{theorem}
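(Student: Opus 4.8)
The plan is a direct variable-counting argument that compares the decision variables appearing in \ILP\ against those in \ISOCP. First I would pin down the variable set of each program. The \ISOCP, defined by objective~\eqref{eq_cone_prob} subject to~\eqref{eq_admit} and~\eqref{eq_bw_norm}, uses only the binary sampling-decision variables $x_{f,s}$; every other symbol in those constraints (namely $\mu_f$, $\sigma_f^2$, $B_s$, and the quantile $z_\delta$) is a fixed parameter, not a decision variable. Turning to \ILP, whose constraints are~\eqref{eq_admit}, \eqref{eq_pos}, \eqref{eq_w_1}--\eqref{eq_w_3}, and~\eqref{eq_soc_lined}, I would verify that it reuses the same $x_{f,s}$ and introduces exactly one new family, the auxiliary product variables $w_{f,f',s}$ used to linearize the cross terms $x_{f,s}\,x_{f',s}$. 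Consequently the quantity to bound is precisely the number of distinct $w_{f,f',s}$ variables.

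Next I would count these auxiliary variables. A variable $w_{f,f',s}$ is instantiated only when both $f$ and $f'$ lie in $\mc{F}_s$, the set of flows traversing switch $s$; each such variable corresponds to a pair of flows sharing that switch. For a fixed $s$, the number of such pairs is at most $|\mc{F}_s|^2$ (or $\binom{|\mc{F}_s|}{2}$ if only unordered distinct pairs are retained, which does not change the order of growth). Since $\mc{F}_s \subseteq \mc{F}$ gives $|\mc{F}_s| \le |\mc{F}|$, each switch contributes at most $|\mc{F}|^2$ auxiliary variables. Summing over switches then yields $\sum_{s\in\mc{S}} |\mc{F}_s|^2 \le \sum_{s\in\mc{S}} |\mc{F}|^2 = |\mc{S}|\,|\mc{F}|^2 = O(|\mc{F}|^2|\mc{S}|)$, matching the claimed bound.

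I expect no genuine obstacle here; the argument is elementary bookkeeping rather than a deep estimate. The only point that demands real care is the first step: I must confirm that $w_{f,f',s}$ is the \emph{sole} new variable family, i.e.\ that no hidden decision variable slips in through the squaring step~\eqref{eq_bw_pow_2} or the final linearized constraint~\eqref{eq_soc_lined}, since every term there is either an existing $x_{f,s}$, a new $w_{f,f',s}$, or a constant. A worthwhile closing remark is that the exact count $\sum_{s\in\mc{S}} |\mc{F}_s|^2$ is typically much smaller than the worst case when flows follow short paths, so $O(|\mc{F}|^2|\mc{S}|)$ should be read as a safe upper estimate rather than a tight one.
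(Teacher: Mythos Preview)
Your proposal is correct and follows essentially the same approach as the paper's own proof: identify that \ILP\ reuses the $x_{f,s}$ variables from \ISOCP\ and introduces only the auxiliary family $w_{f,f',s}$, then bound the number of such variables by $|\mc{F}|^{2}$ per switch and sum over $\mc{S}$. Your version is actually more explicit than the paper's (which simply asserts the $O(|\mc{F}|^{2}|\mc{S}|)$ count without writing out the per-switch bound), and your closing remark about the typical count $\sum_{s}|\mc{F}_s|^{2}$ being much smaller is a nice addition not present in the original.
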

\begin{proof}
	\ILP\ requires decisions variables $x_{f,s}$ for constraints~\eqref{eq_admit} and decision variables $w_{f,f^{'},s}$ for the linearization process. The number of former variables is equal to the number of variables in \ISOCP. However, the latter variables are new and there are at most $O(|\mc{F}|^{2}\mc{S})$ of them. 
\end{proof}

\subsection{Approximation}
While experimenting with ILP, we found that even this formulation is computationally too expensive, and can be solved only for small-sized problem instances using Gurobi. Thus, we turned our attention to developing an approximation formulation that can be applied to large-scale problem instances. To this end, notice that the following inequality is always satisfied:
\begin{align}
	\sqrt{
		\textstyle\sum_{f\in\mc{F}_{s}} \sigma_{f}^{2}\cdot x_{f,s} 
	} \le \textstyle\sum_{f\in\mc{F}_{s}} \sigma_{f}\cdot x_{f,s}\eqend
\end{align}
Using this inequality, we can rewrite~\eqref{eq_bw_norm} as follows:
\begin{gather}\label{eq_bw_approx_lin}
	\textstyle\sum_{f\in\mc{F}_{s}} \mu_{f}\cdot x_{f,s} 
	+
	z_{\delta} \textstyle\sum_{f\in\mc{F}_{s}} \sigma_{f}\cdot x_{f,s}  \le B_{s},
\end{gather}
which is linear.
By substituting the above linear constraints in the original ISOCP, we obtain the following formulation:
\begin{align}
	\text{($\APX$):} \qquad \text{max. } &\sum_{f\in\mathcal{F}}\sum_{s\in\mathcal{S}} x_{f,s}, \label{eq_approx_prob} \\
	\text{s.t. } &\eqref{eq_admit}, \eqref{eq_bw_approx_lin}, \notag
\end{align}
which is an integer linear program with the \textit{same number of decision variables} as in the original formulation.

\section{Evaluations} \label{s:eval}
In this section, we evaluate the performance of \dSamp\ through extensive ns-3 simulations using both synthetic and real traffic traces. In our simulations, we have implemented a controller that calculates sampling allocations at the start of each epoch and configures the switches. When a packet arrives at a switch, it checks the switch configuration stored by the controller, which maps each flow's source and destination IP to a sampling rate. The switch identifies the packet's flow using its source and destination IP and samples it according to a probability equal to the sampling rate.
\subsection{Methodology}
We present three sets of ns-3 experiments. First, we start with a set of \textit{micro benchmarks}. These experiments are primarily designed to study the behavior of the approximate algorithm \APX\ in various settings in order to understand its sampling accuracy, computational efficiency, and sensitivity to various design assumptions, such as the distribution of traffic rates. Next, we present a set of \textit{model-driven} simulations to compare \APX\ with existing sampling solutions presented in\cite{sekar2008csamp} and\cite{cohen2018sampling}. Model-driven simulations provide us with the flexibility to simulate a range of network conditions without being restricted to a specific setup as in trace-driven simulations (which consider a traffic trace collected over a specific time period from one ISP). Specifically, we use model-driven experiments to fine-tune the baseline algorithms that are used for comparison, and use these fine-tuned versions in our trace-driven experiments. Finally, we present a set of \textit{trace-driven} simulations to demonstrate the ability of \APX\ to outperform existing sampling solutions using a real-world traffic trace collected from an ISP backbone\cite{MAWI}. 

\cat{Simulation Settings}
The model-driven and trace-driven simulations presented in this section were conducted in ns-3. The simulations were conducted on a standard desktop machine, equipped with an Intel(R) Core(TM) i9-12900 CPU @ 2.40 GHz and 16 GB of RAM. The \textit{default} settings are as follows:

\begin{itemize}[leftmargin=*]
\item \textit{Network:} The small-scale simulations utilize the Abilene topology with $11$ nodes and $14$ links. The link capacities are set to $10$~Gbps. The duration of each sampling epoch is set to $5$ seconds by default. The results are computed as the average of 5 independent simulation runs.  

\item \textit{Traffic:} The ns-3 on-off application was used to generate traffic based on the input traffic model. The packet size is set to $1$~KB in all simulations. For each flow, the source and destination are chosen at random. To describe the burstiness of traffic, we use the Coefficient of Variation (CoV), defined as CoV = $\mu/\sigma$. A high CoV indicates high variability in flow rates, whereas a low CoV suggests low variability. 

\item \textit{Sampling:} By default, the target sampling rate for each flow is set to $\alpha=0.1$, while the switch sampling capacity violation probability is set to $\delta=0.20$.

\end{itemize}

\begin{table}[t]
	\small
	\centering
	\caption{Runtime comparison between \APX\ and \ILP.}
	\label{tab:APXvsILP_admitted}
	\begin{tabular}{cccccc} 
		\toprule
		& \multicolumn{4}{c}{(Number of Flows, Switch Sampling Capacity)} & \\ 
		\cmidrule(lr){2-5}
		\multirow{-2}{*}{\hspace{0pt}} & (50, 70) & (100, 70) & (500, 70) & (100, 100) & \multirow{-2}{*}{} \\ 
		\midrule
		\APX\ & 0.001s & 0.003s & 0.005s & 0.004s & \\
		\ILP\ & 33.62s & 268.19s & \textgreater600s & \textgreater600s & \\
		\bottomrule
	\end{tabular}
\end{table}
\begin{figure}[t]
	\centering 
	\begin{subfigure}[t]{0.5\linewidth}
		\centering 
		\includegraphics[width=1\linewidth]{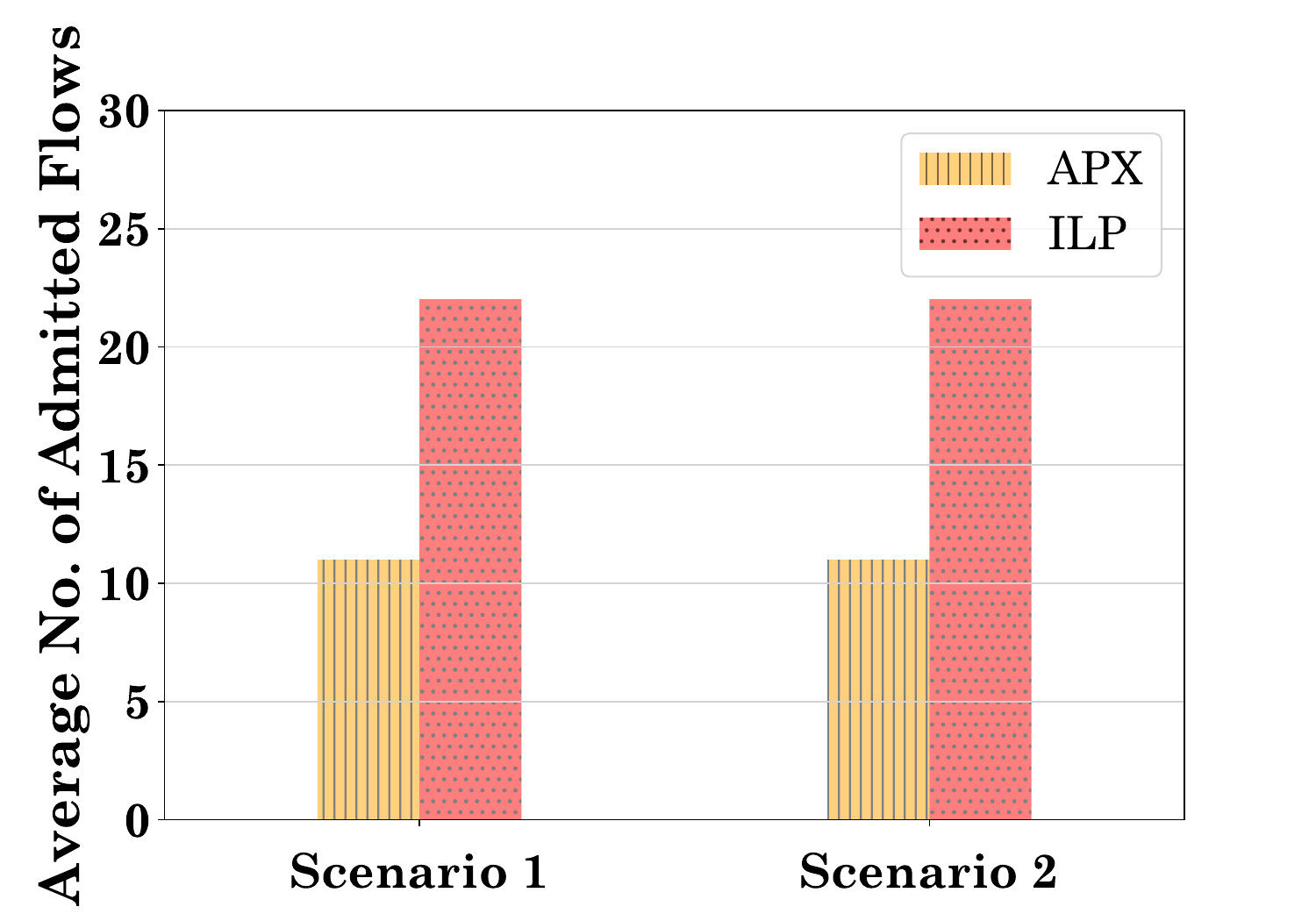} %
		\caption{Total number of admitted flows.}
		\label{APXvsILP_admitted}
	\end{subfigure}\hfil
	\begin{subfigure}[t]{0.5\linewidth}
		\centering
		\includegraphics[width=1\linewidth]{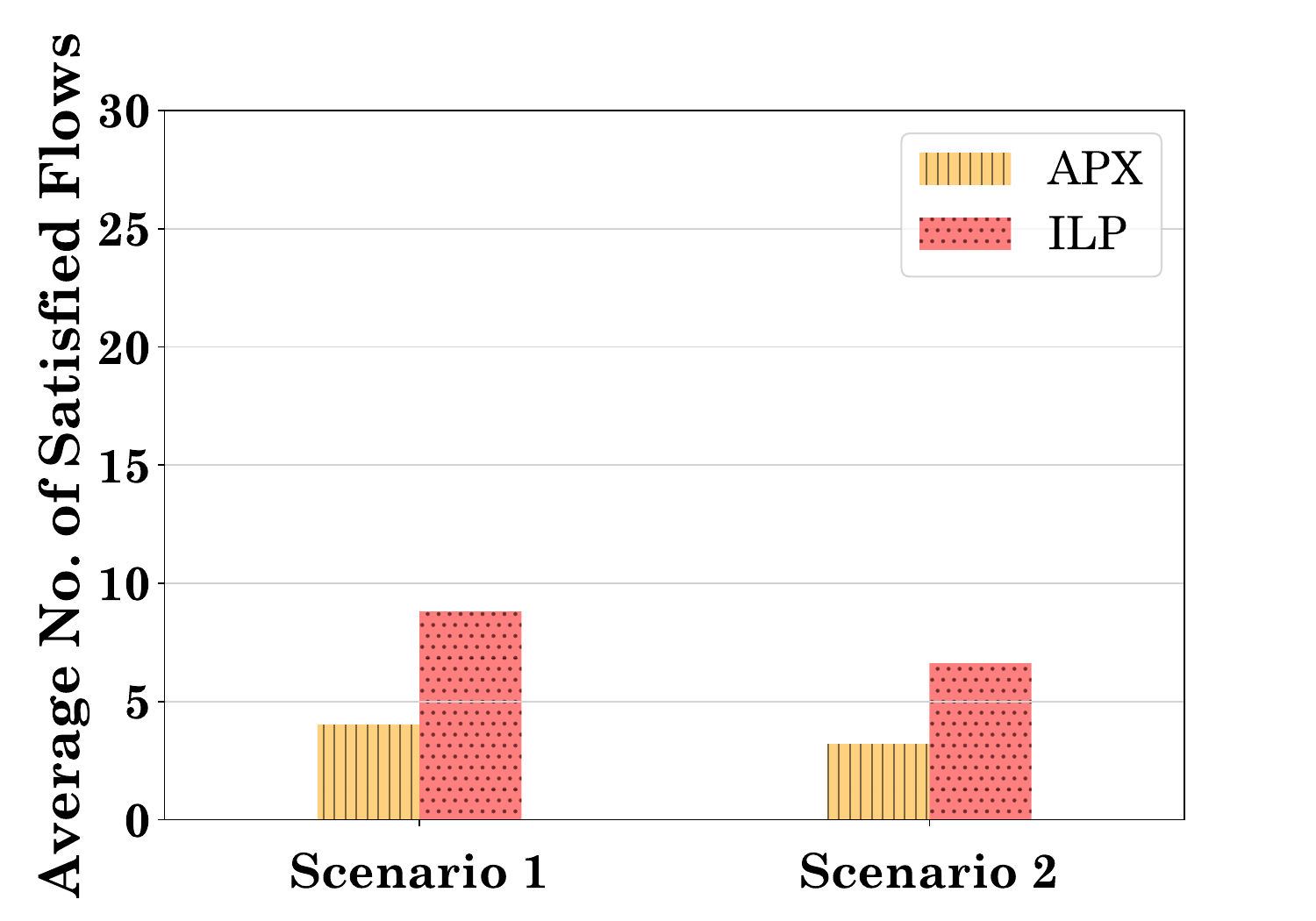}%
		\caption{Total number of fully sampled flows.}
		\label{APXvsILP_average}
	\end{subfigure}
	\caption{Performance comparison between \APX\ and \ILP.}%
	\label{fig:APXvsILP}%
	\vspace{-0.5cm}
\end{figure}

\begin{figure*}[t]
	\centering
	\begin{minipage}[t]{0.32\linewidth}
		\centering
		\includegraphics[width=\linewidth]{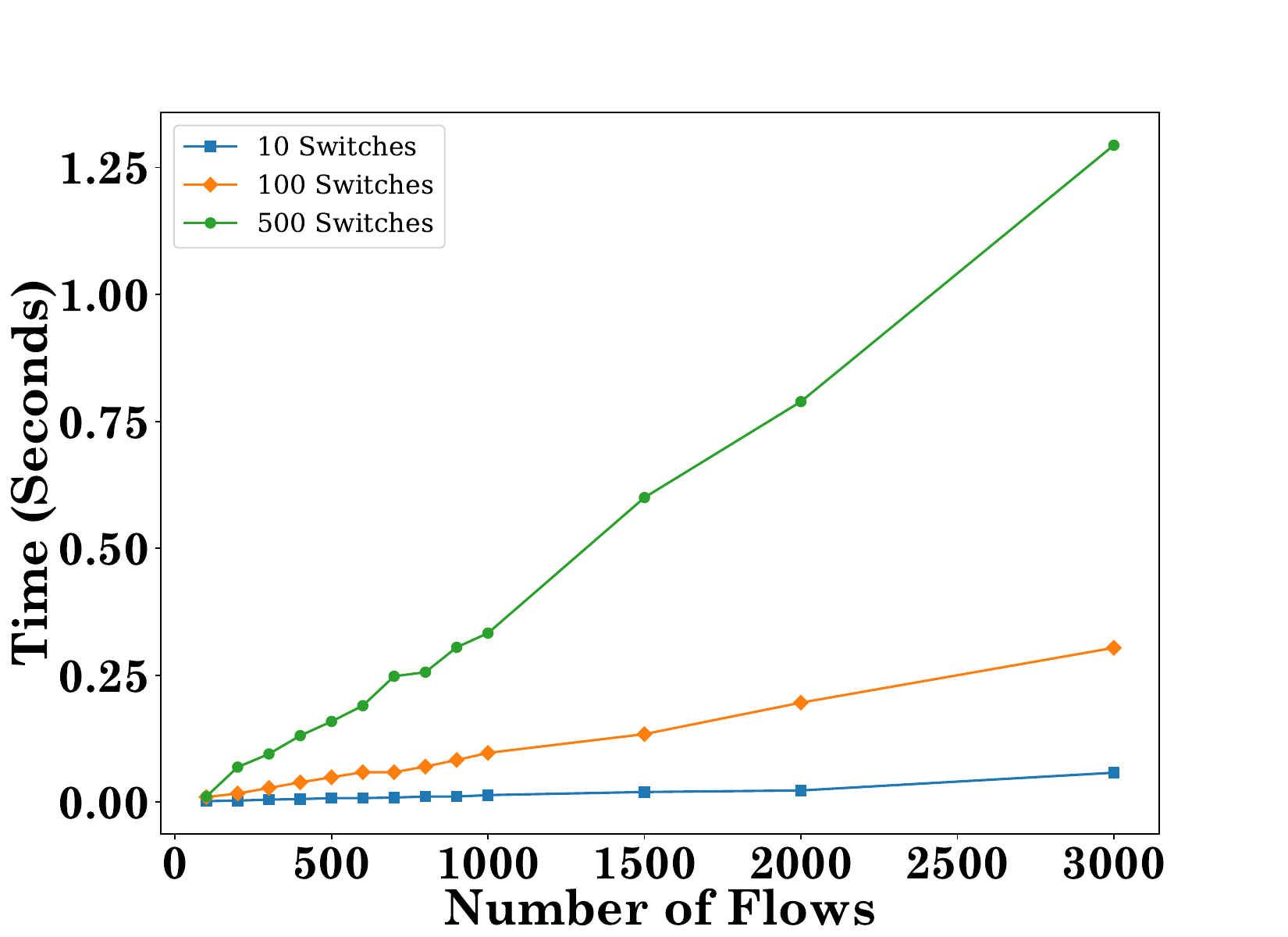}
		\caption{Measured runtime of \APX.}
		\label{APX_runtime}
	\end{minipage}
	\hfil
	\begin{minipage}[t]{0.33\linewidth}
		\includegraphics[width=\linewidth]{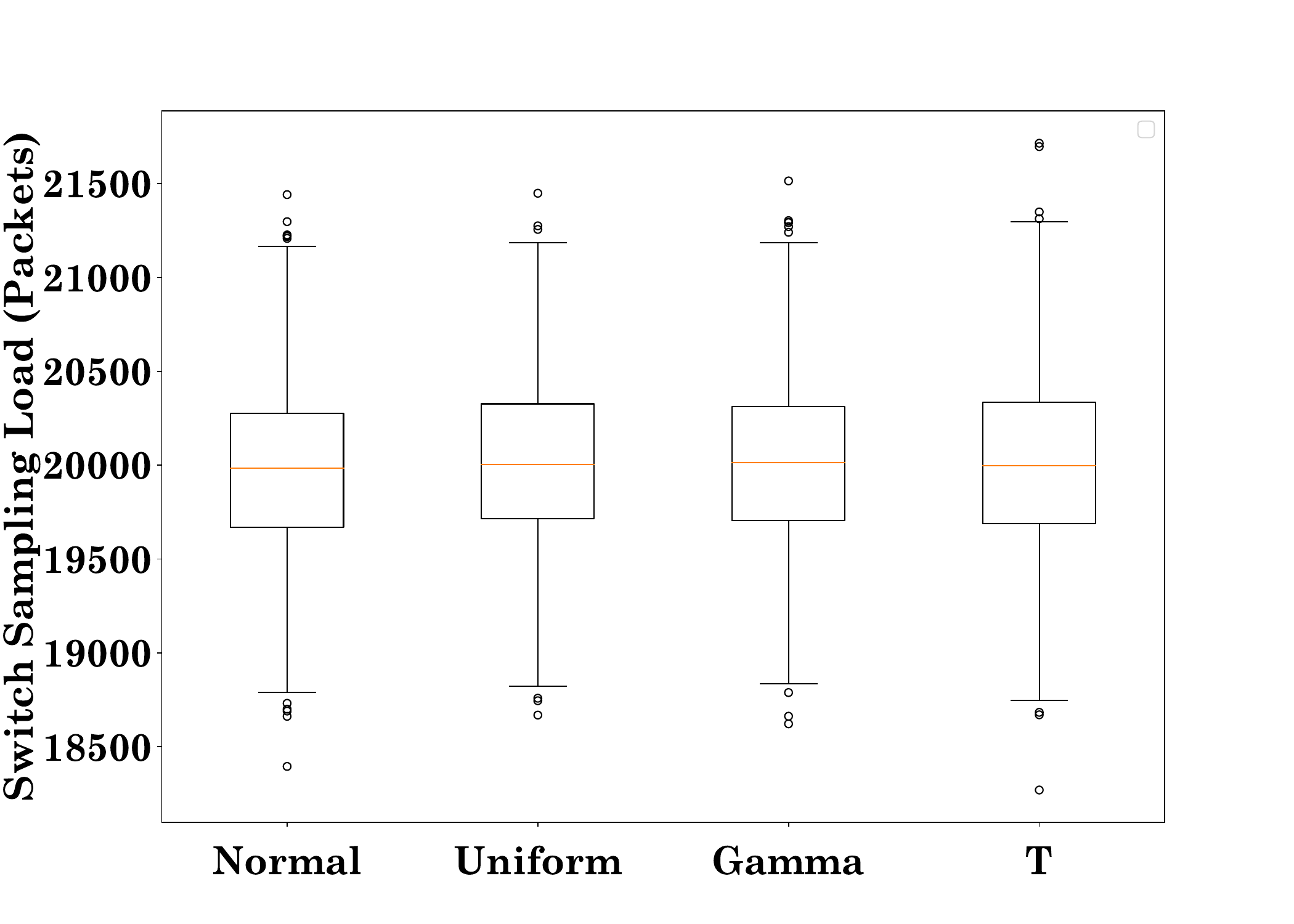}
		\caption{Effect of rate distribution on \APX.}
		\label{Distribution_analysis}
	\end{minipage}	
	\hfil
	\begin{minipage}[t]{0.32\linewidth}
		\centering
		\includegraphics[width=\linewidth]{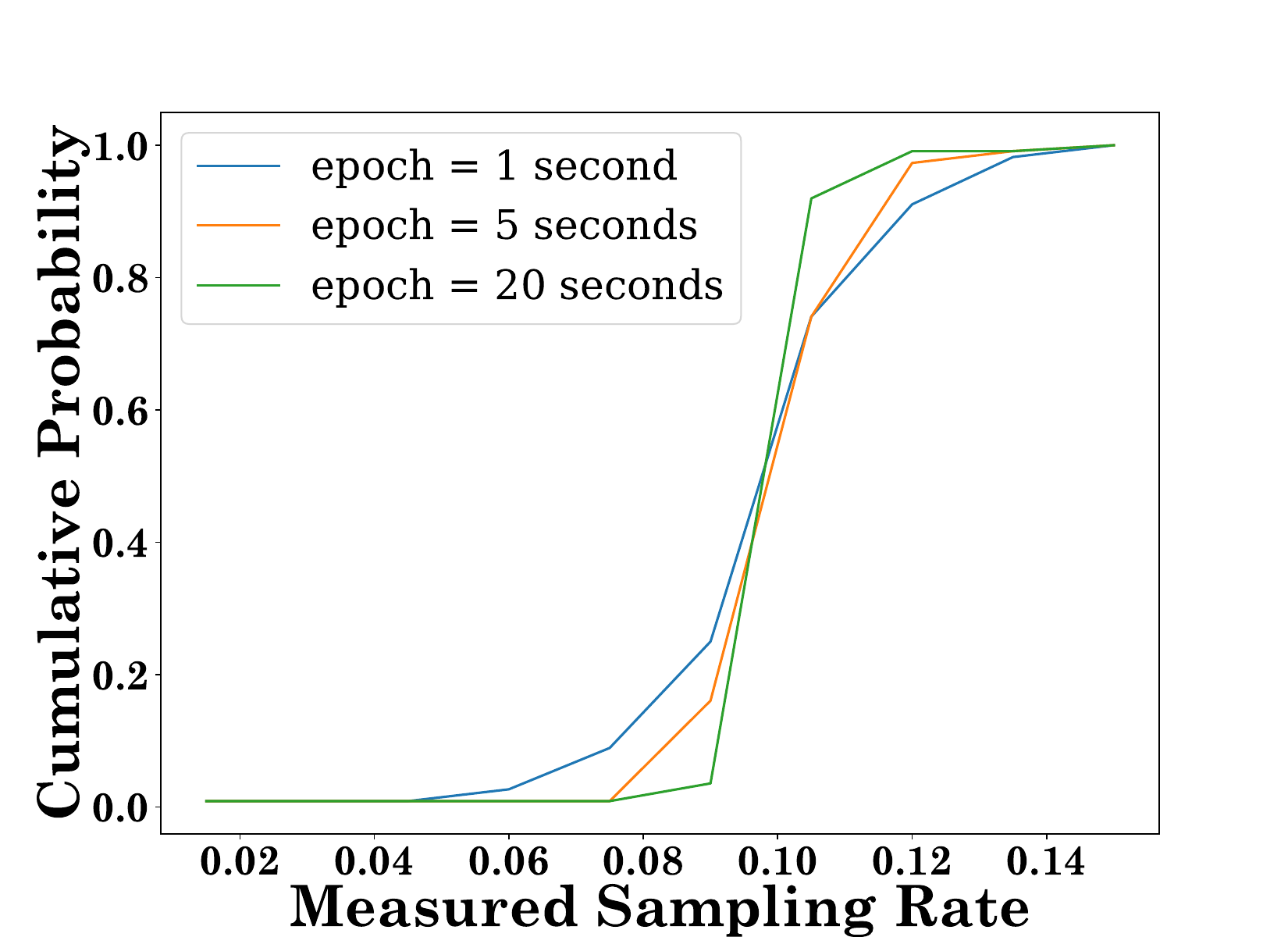}
		\caption{Effect of epoch length on \APX.}
		\label{Simulation_Time_effect}
	\end{minipage}	
	\vspace{-0.5cm}
\end{figure*}

\cat{Implemented Algorithms} In addition to \ILP\ and \APX, we have implemented the following algorithms for comparison:

\begin{itemize}[leftmargin=*]
	\item \textit{Deterministic Sampler (\textbf{DS}):} This algorithm makes sampling decisions solely based on the mean flow rates. 
	
	\item \textit{Deterministic with Headroom (\textbf{DS+2$\boldsymbol{\sigma}$)}:} This algorithm is similar to \DS\ except that it adds a $2\sigma$ headroom to flow rates in order to allow the sampling algorithm to absorb flow rate fluctuations at runtime. It is the approach proposed in~\cite{cohen2018sampling} to deal with dynamic traffic rates. Notice that, for a normal distribution, $95\%$ of samples reside within $2\sigma$ of the mean.
	
	\item \textit{cSamp+$\boldsymbol{\epsilon}$ (\textbf{cS+$\boldsymbol{\epsilon}$)}:} This algorithm is presented in~\cite{sekar2008csamp}. It assumes that the fluctuation in traffic rates around their means is bounded by some $\epsilon > 0$. It then scales the solution computed by the \DS\ algorithm to account for the rate fluctuations that can happen in the worst case.
\end{itemize}
At a high level, \APX\ also reserves some headroom on each switch to deal with rate fluctuations. However, in \APX, the amount of headroom depends on the actual variability (variance) of flow rates, while in \DStwo, it is a fixed value. Also, while the headroom in cSamp depends on the actual flow rates, it is reserved based on the worst case fluctuations of rates leading to a consistently more conservative behavior than \APX, as we will show later in this section.

\cat{Performance Metrics}
The main performance metric, referred to as \textit{``Fully Sampled Flows''},  is the number of flows that are fully sampled by the sampling algorithm, \ie\ whose target sampling rate is satisfied by the sampling algorithm. This metric concisely captures the sampling accuracy achieved by an algorithm. However, to provide more insights about the performance of each sampling algorithm, we also report the following metrics: 
\begin{itemize}
\item \textit{Admitted Flows:} The number of flows that are chosen by the algorithm for sampling. The sampled flows may not be sampled at their desired target sampling rate.
\item \textit{Sampling Rate:} The actual sampling rates measured from the simulations. We would like the measured Sampling Rate to be close to the target sampling rate.
\end{itemize}

\subsection{Micro Benchmarks} \label{microbenchmark}
\cat{\APX\ and \ILP\ Comparison} 
Four sampling scenarios were simulated to compare \APX\ and \ILP. 
Each scenario corresponds to a different combination of number of flows and switch sampling capacity as presented in Table~\ref{tab:APXvsILP_admitted}. The flow rates were set at 200 KBps, each with a CoV of 1. 
The performance results are presented in Fig.~\ref{fig:APXvsILP}. Only the results for the first two scenarios are presented, as \ILP\ was unable to solve the other two scenarios within the allocated time (600 seconds). As can be seen, in both scenarios, \ILP\ achieves a better performance than \APX. However, as shown in Table~\ref{tab:APXvsILP_admitted}, the higher performance of \ILP\ comes at a huge cost in terms of the runtime of the algorithm. Specifically, while \ILP\ can be applied only to small-scale scenarios, it takes \APX\ a few milli-seconds to solve each scenario.

\cat{Scalability Analysis}
We measure the time for Gurobi solver to find a solution for \APX\ in different network scenarios. We utilized the NetworkX Python library to generate random scale-free networks, which is implemented based on the methodology described in~\cite{elmokashfi2010scalability}. The switch capacities were set at 100 pps on all switches.
The results are presented in Fig.~\ref{APX_runtime}. While the computation time increases as the number of switches and flows increases, it remains under a second for $100$ and $500$ switches, and only increases to $1.25$ seconds for the large-scale network with $500$ switches. It is worth mentioning that Gurobi failed to resolve the \ISOCP\ problem for all instances within $600$ seconds, while for \ILP\ problem, it managed to solve the smallest instance within that timeframe.

\cat{Sensitivity to Rate Distribution}
In the derivation of \APX, we applied the tail probability of the normal distribution. A natural question is how sensitive \APX\ is to this assumption. To answer this question, we focus on a single switch, as we applied the normal distribution to compute sampling capacity violation of each switch. We setup $20$ flows through the switch, each with an average rate of 1000 pps and a variance of 10,000 pps$^2$, translating to a CoV of 0.1. We generate the actual rate of each flow in the simulations using Normal, Gamma, Uniform, and T distribution.
The goal is to sample all flows going through the switch, indicated by $x_f = 1$ and $\alpha = 1$ at a violation probability of $5\%$. Utilizing the \APX\ algorithm, it was calculated that the switch requires a minimum capacity of $20,735.6$ pps.
In the simulations, we set the sampling capacity of the switch to a very large value, and then measure the actual sampling load on the switch under each traffic distribution.
The results are presented in Fig.~\ref{Distribution_analysis}. We can see that the median switch load is close to 20,000 packets, with an inter-quartile range between 19500 and 20500 packets, \textit{regardless of the traffic distribution}. Specifically, we found that in only about $5\%$ of all the simulation runs for each distribution, the switch's capacity was violated, which was remarkably close to the desired $5\%$ violation probability. The reason for this is that the violation probability is computed over the sum of flow rates for each switch. Based on the central limit theorem, the sum approaches normal distribution as the number of flows becomes large.

\cat{Effect of Epoch Length}
To assess the effect of epoch length on measured sampling rates, we used Abilene topology with flow rates uniformly set at 200 KBps and a CoV of 1. The capacity of all switches was configured to 100 pps. Out of 100 flows chosen randomly, \APX\ admits 22 of these flows.
The results are depicted in Fig.~\ref{Simulation_Time_effect}. We observe that by increasing the epoch length, the measured sampling rates converge to 0.1, which is the desired target rate.  
However, longer epochs have the downside of reduced responsiveness to new sampling requests that arrive mid-epoch. This presents a trade-off that can be set by network operators depending on their goals. 
\begin{figure*}[t]
	\setkeys{Gin}{width=1.15\linewidth}
	\begin{subfigure}[t]{0.31\textwidth}
		\centering  \includegraphics[]{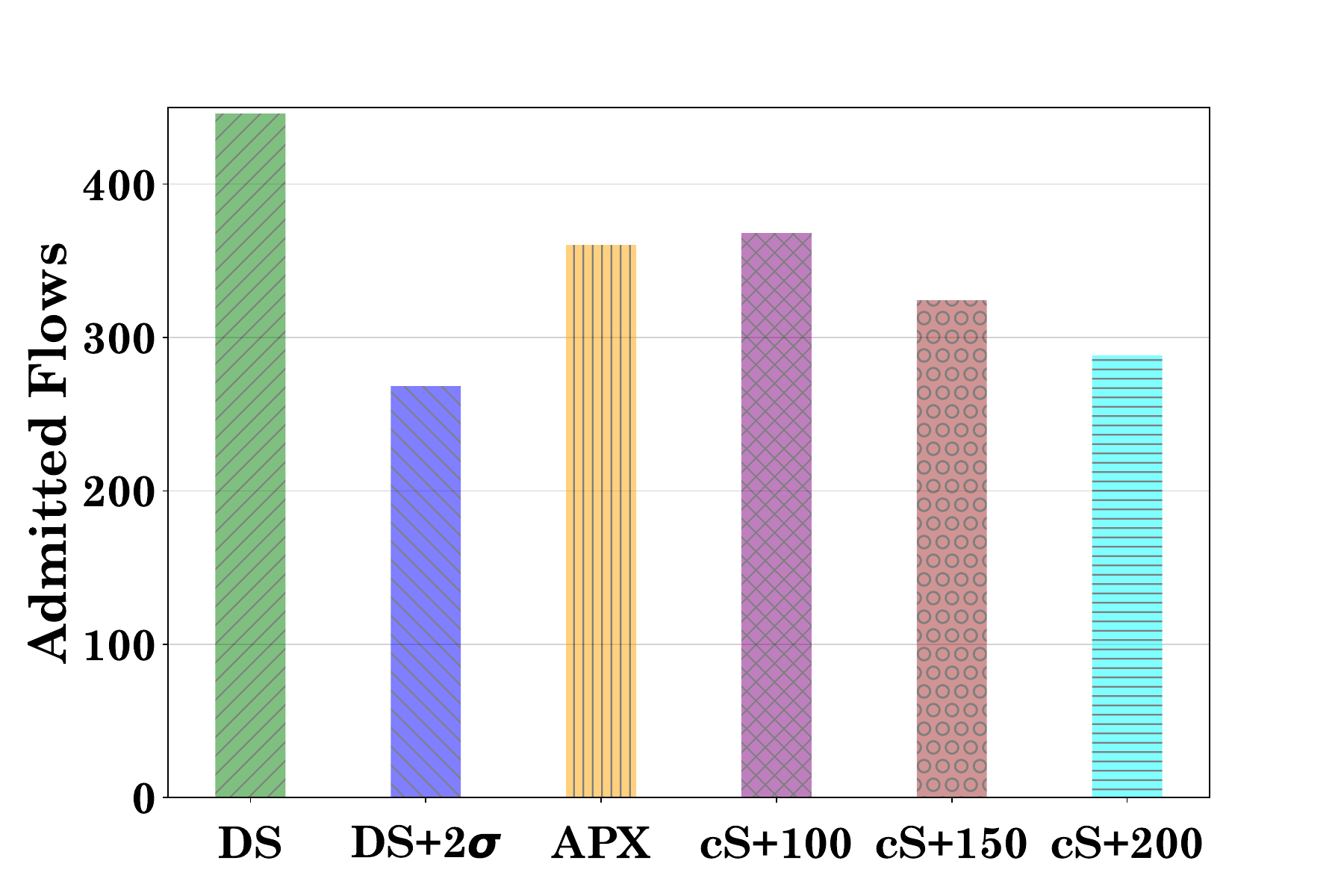}%
		\caption{Total number of admitted flows.}
		\label{comparison_ModelA}
	\end{subfigure}\hfill
	\begin{subfigure}[t]{0.31\textwidth}
		\centering \includegraphics[]{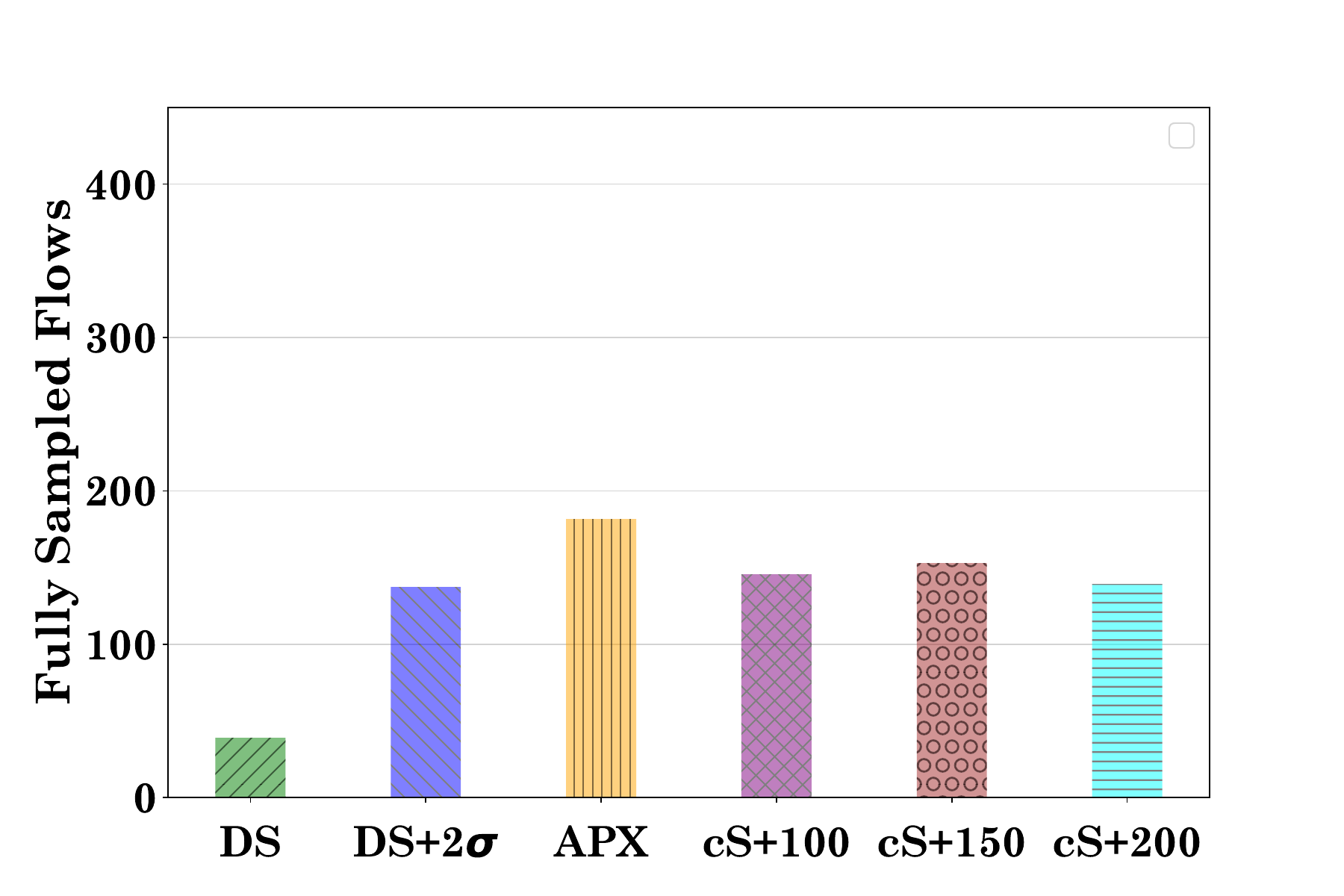}%
		\caption{Total number of fully sampled flows.}
		\label{comparison_ModelB}
	\end{subfigure}\hfill
	\begin{subfigure}[t]{0.31\textwidth}
		\centering \includegraphics[]{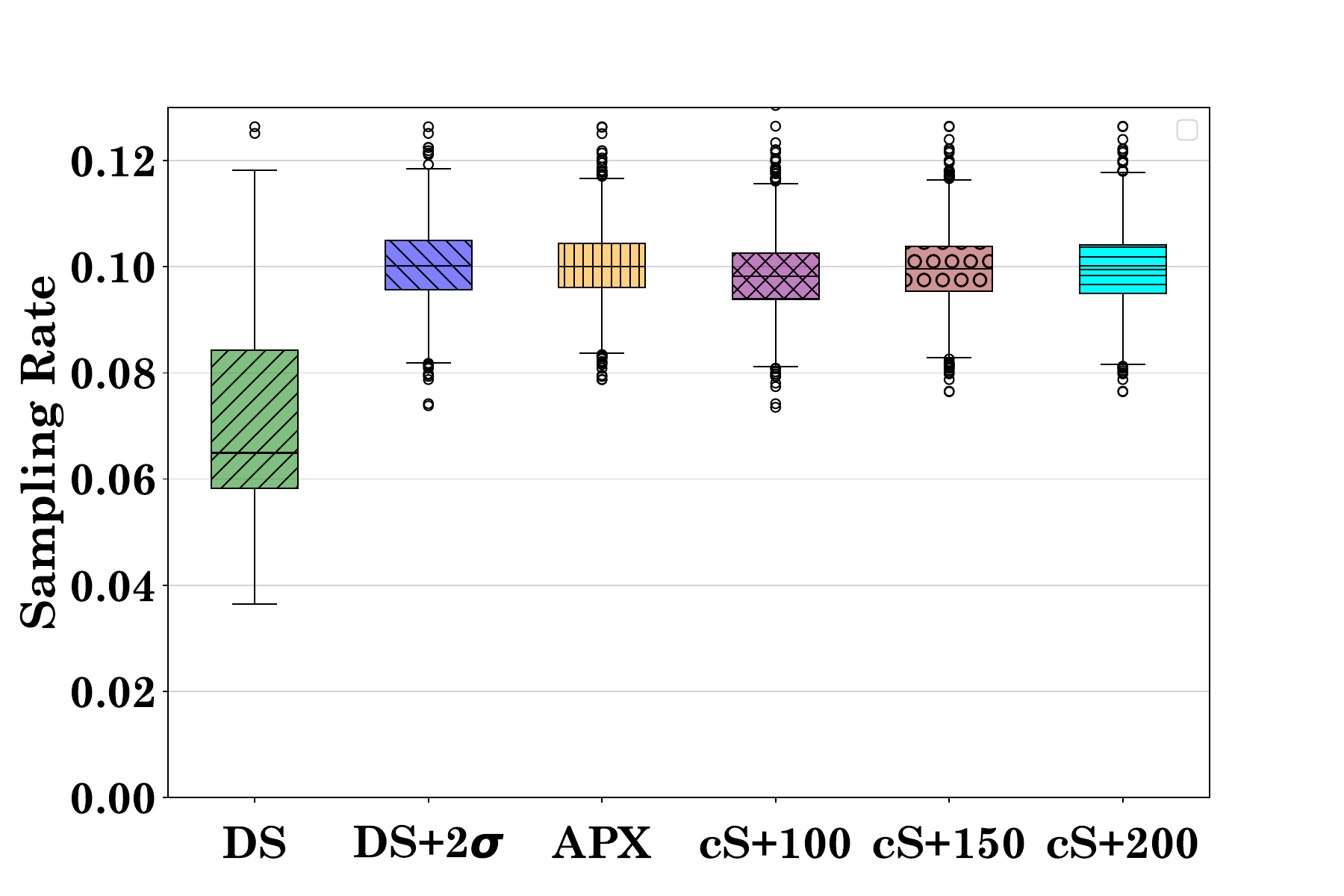}%
		\caption{Measured sampling rates (excluding zero rates).}
		\label{comparison_ModelC}
	\end{subfigure}
	\caption{Model-driven ns-3 simulations results.\label{comparison_Model}}%
	\vspace{-0.5cm}
\end{figure*}

\subsection{Model-Driven Simulations}\label{model_driven}
\cat{Traffic Generation}
Since the Abilene topology is used, a total of 550 flows ($110\times5$) are available for sampling over 5 epochs. Out of these, 446 flows (about $80\%$) were chosen across all 5 epochs. 
We assign flow rates by randomly selecting their mean from the set $\{200,300,500\}$ KBps. Additionally, for each flow, with the probability of $30\%$ and $70\%$, its CoV is set to $0.2$ (low variability) or $2$ (high variability), respectively.
%
%
The flow rates are then sampled from a truncated normal distribution based on the assigned mean and variance. During each experiment, the flow rates change every 0.1 seconds. 
The switch sampling capacities are set to 400 pps.

\cat{Results and Discussion}
The results are depicted in Fig.~\ref{comparison_Model}. Specifically, Fig.~\ref{comparison_ModelA} shows Admitted Flows for each algorithm (from 446 flows). As expected, \DS\ is the most aggressive algorithm due to its lack of consideration for flow rate fluctuations. 
Fig.~\ref{comparison_ModelB} presents the results for Fully Sampled Flows with different algorithms. We see that while \DS\ was the most aggressive in admitting flows, it achieves the lowest Fully Sampled Flows among all the algorithms. For cSamp, we had to decide about the parameter $\epsilon$. We experimented with a range of values, but have only plotted \csamphundred, \csamphundredfifty, and \csamptwohundred, as we found that other values make the algorithm too aggressive or conservative leading to lower performance. Nevertheless, we observe that \APX\ can fully sample more flows that even the best version of cSamp, namely \csamphundredfifty. Not surprisingly, we also find that \DStwo\ falls somewhere in the middle of the pack, but clearly less efficient than \APX. Finally, Fig.~\ref{comparison_ModelC} presents the Sampling Rate of each algorithm, where it can be seen that except DS and \csamphundred, all other algorithms can actually achieve sampling rates that are very close to the target sampling rate. In fact, while there are some variations in measured sampling rates for these algorithms, the median sampling rates closely match the $0.10$ target rate.

\begin{figure*}
	\setkeys{Gin}{width=1.15\linewidth}
	\begin{subfigure}[t]{0.31\textwidth}
		\centering  \includegraphics[]{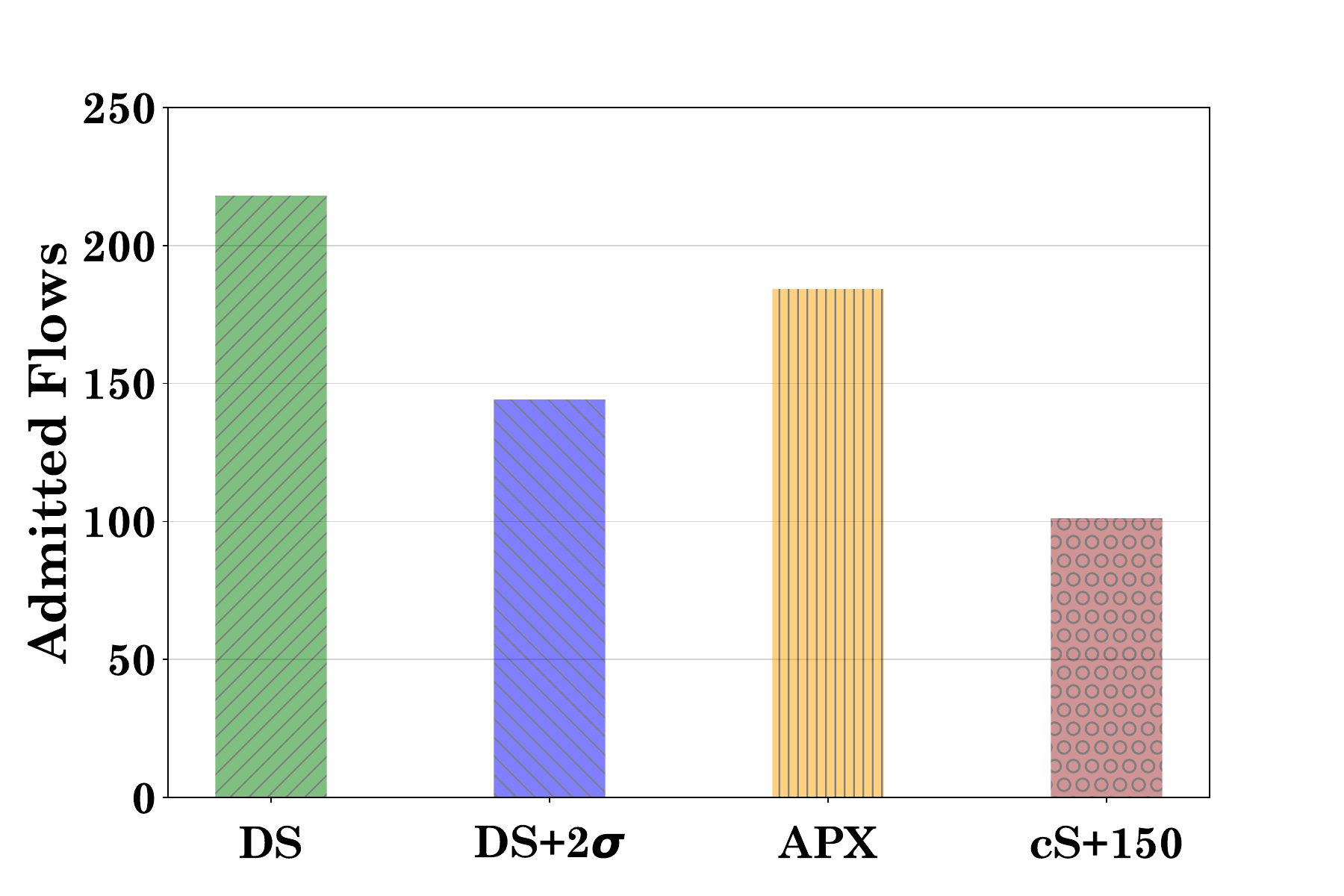} %
		\caption{Total number of admitted flows.}
		\label{comparison_TraceA}
	\end{subfigure}\hfill
	\begin{subfigure}[t]{0.31\textwidth}
		\centering
		\includegraphics[]{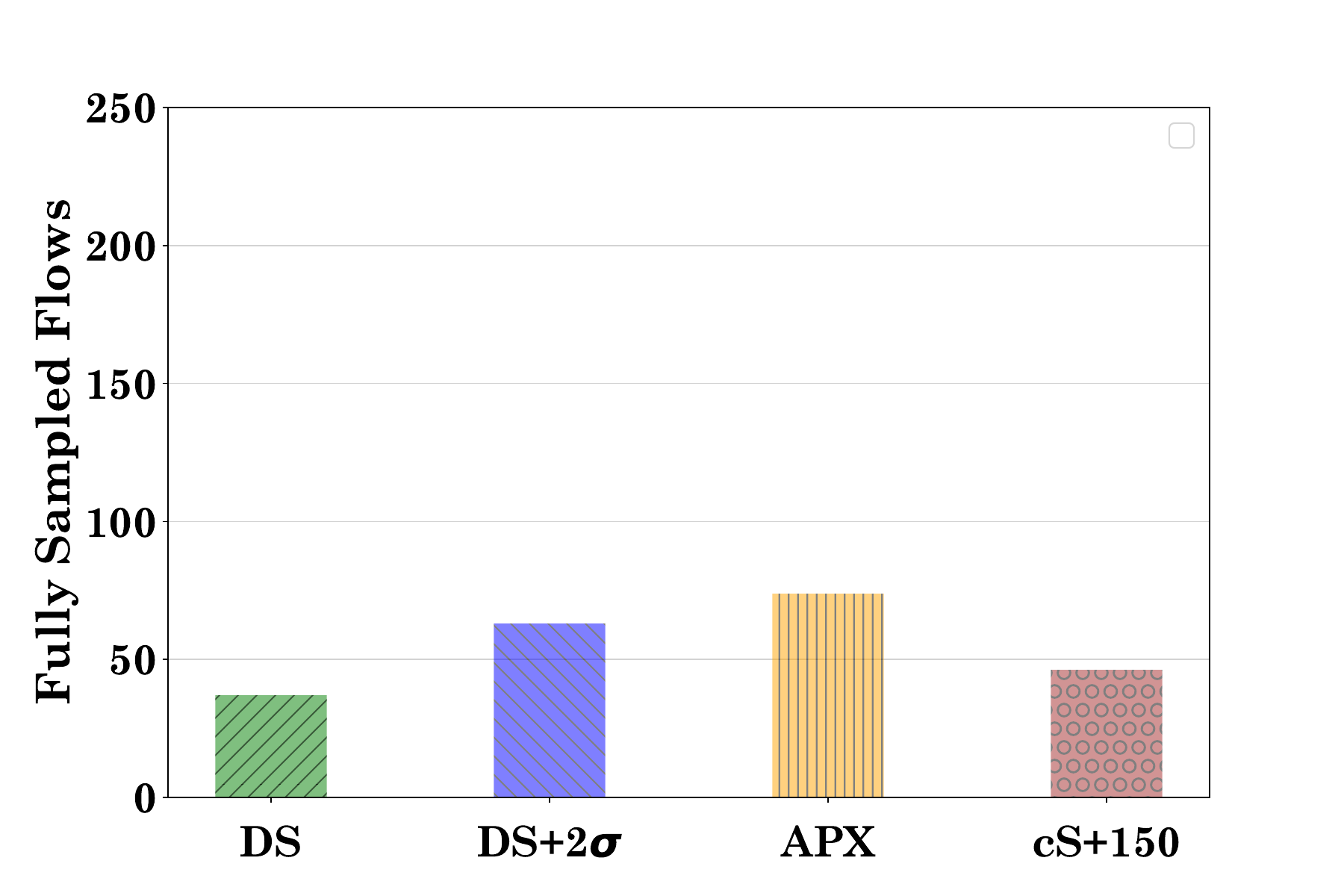}%
		\caption{Total number of fully sampled flows.}
		\label{comparison_TraceB}
	\end{subfigure}\hfill
	\begin{subfigure}[t]{0.31\textwidth}
		\centering \includegraphics[]{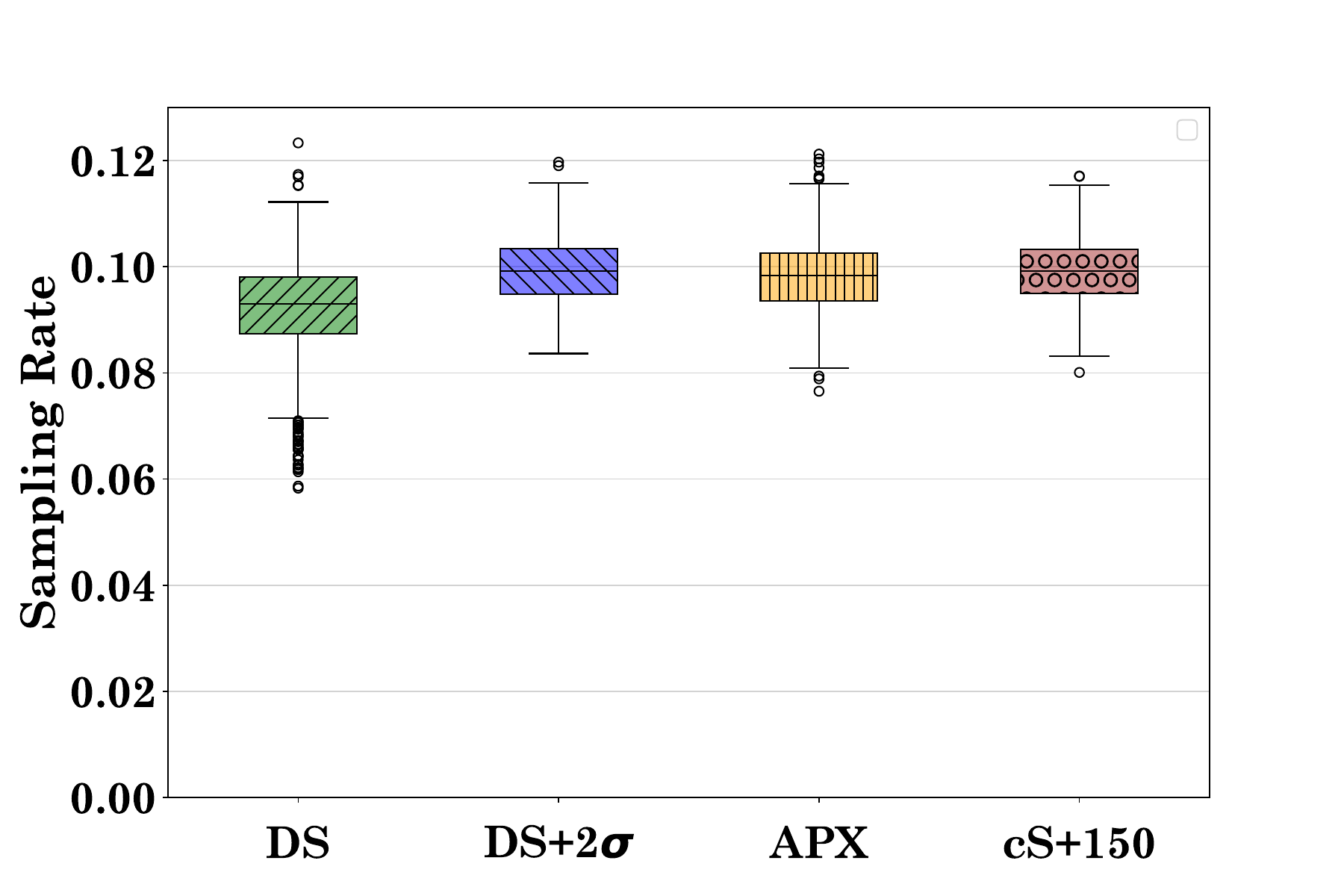}%
		\caption{Measured sampling rates (excluding zero rates).}
		\label{comparison_TraceC}
	\end{subfigure}
	\caption{Trace-driven ns-3 simulation results.}%
	\label{comparison_Trace}
	\vspace{-0.3cm}
\end{figure*}
\subsection{Trace-Driven Simulations}\label{trace_driven}
In this experiment, for comparison between \APX\ and cSamp, we selected \csamphundredfifty, as it was the top-performing cSamp version in the model-driven simulations.

\cat{Traffic Generation} In these simulations, we utilized the actual traffic data obtained from an ISP backbone link\cite{MAWI} to generate rates for various flows. Each traffic trace consists of detailed packet-level information, capturing traffic over the ISP link for a duration of 15 minutes on a specific day. 
The traffic rate for each flow was changed every 100 milliseconds throughout the simulation, as in the model-driven simulations. In order to speed up the simulations, all traffic rates from the original files were divided by 100. This scaling down does not alter the CoV in comparison to the original data.
Similar to the model-driven simulations, the Abilene topology was used, and there were a total of 550 flows, out which 443 flows (about $80\%$) were randomly chosen for sampling. The sampling capacity of each switch was set to 200 pps.

\cat{Results and Discussion}
The results are shown in Fig.~\ref{comparison_Trace}.
A few observations are in order. First, we observe that \APX\ outperforms all the other algorithms while still achieving a median sampling rate that matches the target sampling rate. Second, while the relative performance of the \DS, \DStwo, and \APX\ algorithms is similar to what was observed in model-driven simulations, \csamphundredfifty\ performs poorly even compared to \DStwo. The reason is that \csamphundredfifty\ considers the worst-case rate fluctuations, which makes it very conservative under the ISP trace, which is smoother compared to model-driven traces. This is indeed a major problem with both \DStwo\ and cSamp in networks with dynamic traffic. The parameters of these algorithms are tuned once at the deployment time based on the expected traffic characteristics, but when inevitably traffic characteristics change over time, they achieve sub-optimal performance. In contrast, \APX\ has no parameters to be tuned for each specific type of traffic, and instead adapts to traffic changes by design, leading to higher performance than the other algorithms.

\section{Related Works}
\label{s:related}
Traffic sampling is a well-studied area. In the following, we briefly review some representative works with a focus on per-flow sampling solutions.

\cat{Per-Port Sampling} 
NetFlow~\cite{claise2004cisco} and sFlow~\cite{sflow} are widely used for per-port sampling. In per-port sampling, a switch samples a fraction of packets that pass through one or more of its ports regardless of which flow these packets belong to. Several works try to improve the performance of NetFlow and sFlow by more intelligently distributing sampling load on network switches. For instance, the works~\cite{xu2019lightweight} and~\cite{sadrhaghighi2021softtap} propose sampling solutions that aim at minimizing the sampling load on network switches by coordinating sampling across switches. However, since per-port sampling solutions are based on a fixed sampling rate for each switch port, they are biased toward flows that send at a higher rate, which results in low accuracy for many management tasks that require a minimum per-flow sampling rate~\cite{mai2006sampled,carela2011analysis,ha2016suspicious}.

\cat{Uncoordinated Per-Flow Sampling} 
The works in this category consider per-flow sampling, but the sampling decisions are made on a per-switch basis without any network-wide coordination. For instance, the works~\cite{estan2004building} and~\cite{kompella2005power} propose algorithms for adjusting the sampling rate on each switch in response to variations in traffic rate. Other works in this category focus on providing better traffic estimates from the sampled packets~\cite{duffield2001charging,hohn2003inverting} or reducing the communication overhead~\cite{du2022self}. These works, however, do not consider sampling coordination among network switches, and as such, are less efficient compared to coordinated sampling solutions. The reason is that when switches sample flows without coordination, it is possible that some packets are sampled redundantly by multiple switches. Given the limited sampling capacity of switches, the result is a lower effective sampling rate for each flow, which translates into lower flow visibility in the network.

\cat{Coordinated Per-Flow Sampling} The works in this category leverage a centralized controller to coordinate sampling responsibilities across network switches. The works~\cite{sekar2008csamp,sekar2010revisiting,chang2011leisure,cohen2018sampling} distribute sampling load among different switches. These works, however, assume that the sampling controller knows the set of flows and their sending rates in advance, which is not realistic.
The work~\cite{sadrhaghighi2022flowshark} considers coordinated sampling in an online setting, where the controller does not have any information about the flows that will arrive in the system for sampling in the future. They also separate sampling decisions for short and long flows, by having a fixed sampling rate for short flows and only performing coordinated sampling for long flows.
However, all the aforementioned solutions assume fixed flow rates during the sampling period, which may lead to under-sampling, resulting in lower accuracy in some network management tasks such as anomaly detection. Several works consider flow rate fluctuations in sampling to increase the accuracy of per-flow sampling. For instance, OpenMeasure~\cite{liu2016openmeasure} predicts the size of flows based on previously collected flow samples. While their method considers variations in flow sizes across different measurement intervals, it does not account for the instantaneous changes in flow rates within the same measurement interval. 
Perhaps the work closest to ours is cSamp~\cite{sekar2008csamp}, which considers coordinated network-wide sampling. While the cSamp sampling algorithm is designed assuming fixed flow rates, it has a post-processing mechanism to make the solution robust against estimation errors in flow rates. Specifically, the sampling solution computed for fixed flow rates is \textit{scaled} to account for the maximum deviation in flow rates based on the assumed estimation error. As we show in Section~\ref{s:eval}, this approach is very conservative by essentially considering the worst-case flow rates, which means that it may fail to fully utilize the available network sampling capacity. A somewhat similar approach is considered in~\cite{cohen2018sampling}, where the authors scale the switch sampling capacities by a fixed factor to leave some headroom for potential flow rate fluctuations. However, a fixed headroom is not very efficient since it does not account for dynamic changes in network traffic, as demonstrated in our evaluations.

In contrast to the above works, our approach is to explicitly include flow rate fluctuations in the design of the sampling controller. Rather than assuming the worst-case flow rates, we consider a probabilistic approach in which we directly bound the probability that any switch is overloaded (its sampling capacity is violated) due to fluctuation in flow rates.

\section{Conclusion}
\label{s:conc}
In this paper, we presented the design and evaluation of a sampling system called \SysName, capable of handling dynamic flow rates. We formulated the problem of network-wide coordinated sampling with dynamic flow rates as an ISOCP, which proved infeasible for even small-scale networks. To address the computational complexity of larger networks, we created two alternative reformulations: one transformation and one approximation. By evaluating their runtime in various scenarios, we demonstrated that the transformation works well for small-scale networks, while the approximation is more effective for larger-scale networks. Through both model-driven and trace-driven ns-3 simulations, we showed that our proposed system outperforms existing sampling systems in achieving a higher percentage of fully sampled flows. A potential future direction is to design an online algorithm to solve the problem, where sampling requests are processed instantaneously instead of being batched over sampling epochs.

\bibliographystyle{IEEEtran}
\bibliography{bib/IEEEabrv,bib/main} 

\end{document}